\newtheorem{theorem}{Theorem}[section]
\newtheorem{lemma}[theorem]{Lemma}
\newtheorem{claim}[theorem]{Claim}
\newtheorem{definition}[theorem]{Definition}
\newtheorem{observation}[theorem]{Observation}
\newcommand{\ignore}[1]{}
\newcommand{\cD}{\mathcal{D}}
\newcommand{\cP}{\mathcal{P}}
\newcommand{\R}{\mathbb R}
\newcommand{\eps}{\varepsilon}
\newcommand{\poly}{\mathrm{poly}}
\newcommand{\calE}{{\cal E}}
\newcommand{\calF}{{\cal F}}
\newcommand{\bb}{\boldsymbol{b}}
\newcommand{\ceil}[1]{\lceil#1\rceil}
\newcommand{\Exp}{\EX}
\newcommand{\EX}{\hbox{\bf E}}
\newcommand{\otilde}{\widetilde{O}}
\newcommand{\Sec}[1]{\hyperref[sec:#1]{\S\ref*{sec:#1}}} 
\newcommand{\Eqn}[1]{\hyperref[eq:#1]{(\ref*{eq:#1})}} 
\newcommand{\Fig}[1]{\hyperref[fig:#1]{Fig.\,\ref*{fig:#1}}} 
\newcommand{\Tab}[1]{\hyperref[tab:#1]{Tab.\,\ref*{tab:#1}}} 
\newcommand{\Thm}[1]{\hyperref[thm:#1]{Theorem\,\ref*{thm:#1}}} 
\newcommand{\Fact}[1]{\hyperref[fact:#1]{Fact\,\ref*{fact:#1}}} 
\newcommand{\Lem}[1]{\hyperref[lem:#1]{Lemma\,\ref*{lem:#1}}} 
\newcommand{\Prop}[1]{\hyperref[prop:#1]{Prop.~\ref*{prop:#1}}} 
\newcommand{\Cor}[1]{\hyperref[cor:#1]{Corollary~\ref*{cor:#1}}} 
\newcommand{\Conj}[1]{\hyperref[conj:#1]{Conjecture~\ref*{conj:#1}}} 
\newcommand{\Def}[1]{\hyperref[def:#1]{Definition~\ref*{def:#1}}} 
\newcommand{\Alg}[1]{\hyperref[alg:#1]{Alg.~\ref*{alg:#1}}} 
\newcommand{\Ex}[1]{\hyperref[ex:#1]{Ex.~\ref*{ex:#1}}} 
\newcommand{\Clm}[1]{\hyperref[clm:#1]{Claim~\ref*{clm:#1}}} 
\newcommand{\Step}[1]{\hyperref[step:#1]{Step~\ref*{step:#1}}} 
\newcommand{\Obs}[1]{\hyperref[obs:#1]{Obs~\ref*{obs:#1}}} 
\renewcommand{\epsilon}{\varepsilon}
\begin{document}
\title{Adaptive Boolean Monotonicity Testing in Total Influence Time}
\date{}

\author{Deeparnab  Chakrabarty \\
 Dartmouth College\\
{\tt deeparnab@dartmouth.edu}
\and
C. Seshadhri \\
University of California, Santa Cruz\\
{\tt sesh@ucsc.edu}
}

\def\hf{\hat{f}}
\def\hg{\hat{g}}
\def\Inf{{\sf Inf}}
\def\Viol{{\sf Viol}}
\def\I{{\mathbf I}}
\def\V{{\mathbf V}}
\def\R{{\mathbf R}}
\def\bb{\mathbf{b}}
\def\bI{\mathbf{I}}

\maketitle
\begin{abstract} The problem of testing monotonicity 
of a Boolean function $f:\{0,1\}^n \to \{0,1\}$ has received much attention
recently. Denoting the proximity parameter by $\varepsilon$, the best tester is the non-adaptive $\widetilde{O}(\sqrt{n}/\eps^2)$ tester 
of Khot-Minzer-Safra (FOCS 2015). Let $\bI(f)$ denote the total influence
of $f$. We give an adaptive tester whose running time is $\bI(f)\poly(\varepsilon^{-1}\log n)$.
\end{abstract}

\section{Introduction} \label{sec:intro}

Consider the boolean hypercube $\{0,1\}^n$, endowed with the coordinate-wise partial order denoted by $\prec$.
A function $f:\{0,1\}^n \to \{0,1\}$ is monotone if $\forall x \prec y, f(x) \leq f(y)$. 
Let $d(f,g)$, the distance between two functions $f, g$, be $|\{x: f(x) \neq g(x)\}|/2^n$.
The distance to monotonicity, $\eps_f$, is the minimum distance of $f$ to a monotone $g$.
The problem of monotonicity testing is as follows. Given query access to $f$ and a proximity parameter $\eps > 0$,
design a (randomized) procedure that 
accepts if $f$ is monotone ($\eps_f = 0$), and rejects if $f$ is $\eps$-far from monotone ($\eps_f > \eps$).
Both the above guarantees hold with probability at least $2/3$.
Such a procedure is called a monotonicity tester. A tester is non-adaptive if the queries made
are independent of the values of $f$, and adaptive otherwise. A tester is one-sided
if it accepts monotone functions with probability $1$, and two-sided otherwise.

The problem of monotonicity testing over the hypercube has been the subject of much study~\cite{Ras99,GGLRS00,DGLRRS99,ChSe13-j,ChenST14,ChenDST15,KMS15,BeBl16,Chen17}.
The first result was the $O(n/\eps)$ tester by Goldreich et al. and Dodis et al.~\cite{GGLRS00,DGLRRS99} (refer to Raskhodnikova's thesis~\cite{Ras99}).
The authors unearthed a connection to directed isoperimetric theorems to give a $o(n)$ 
tester~\cite{ChSe13-j}, whose analysis was further improved by Chen et al.~\cite{ChenST14}. In a remarkable result,
Khot-Minzer-Safra (henceforth KMS) designed an $\otilde(\sqrt{n}/\eps^2)$-query 
non-adaptive, one-sided tester~\cite{KMS15}.
The key ingredient is a directed analog of Talagrand's isoperimetry theorem~\cite{Tal93}.

The significance of the $\sqrt{n}$ bound of KMS to $f$ is underscored by a nearly
matching lower bound by Fischer et al. for non-adaptive, one-sided testers~\cite{FLNRRS02}.
A nearly matching two-sided, non-adaptive lower bound of $\Omega(n^{1/2-\delta})$ for any $\delta > 0$ was showed by Chen et al.~\cite{ChenST14,ChenDST15} using highly non-trivial techniques of generalized Central Limit Theorems.

In a major advance, Belovs and Blais~\cite{BeBl16} proved the first polynomial \emph{adaptive} (two-sided) lower bound 
of $\widetilde{\Omega}(n^{1/4})$. This was further improved by Chen et al.~\cite{Chen17} to $\widetilde{\Omega}(n^{1/3})$.
Both of these are highly non-trivial results, showing that challenges in arguing about
adaptive monotonicity testers. Interestingly, Belovs and Blais~\cite{BeBl16} show $O(\log n) + 2^{O\left(1/\eps^3\right)}$ time adaptive monotonicity testers
for Regular Linear Threshold Functions, which form the hard distribution of the nearly $\Omega(\sqrt{n})$ lower bound
of Chen et al.~\cite{ChenST14,ChenDST15}.

\noindent
This leads to the main open problem in monotonicity testing of Boolean functions: Does adaptivity always help?

\subsection{Results}

Given our (current) inability to resolve that question, we ask a refined, simpler question. 
Is there some natural condition of functions under which the $\sqrt{n}$ non-adaptive complexity of KMS can be beaten?

Let $\bI(f)$ denote the total influence (also called the average sensitivity) of $f$. 
Letting $\cD$ be the uniform distribution
over all pairs $(x,y)$ at Hamming distance $1$, $\bI(f) := n\cdot\Pr_{(x,y) \sim \cD}[f(x) \neq f(y)]$.
Our main theorem follows.

\begin{theorem} \label{thm:intro} Consider the class of functions with total influence at most $I$.
There exists a one-sided, adaptive monotonicity tester for this class with running time $O(I\cdot \poly(\log n)/\eps^4)$.
\end{theorem}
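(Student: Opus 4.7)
The plan is to design an adaptive tester that uses binary search on random monotone chains to locate bi-chromatic edges efficiently, exploiting the fact that when $\bI(f)$ is small there are few such edges in total. Combined with the classical edge-tester analysis, this should suffice to find a violating edge quickly.

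\textbf{Setup via edge counts.} By the edge-tester analysis of Goldreich et al., if $f$ is $\eps$-far from monotone, then the fraction of edges of $\{0,1\}^n$ that are violating is at least $\Omega(\eps/n)$, i.e., the count of violating edges is at least $\Omega(\eps\cdot 2^{n-1})$. On the other hand, by definition of total influence the number of bi-chromatic edges is exactly $\bI(f)\cdot 2^{n-1}\le I\cdot 2^{n-1}$. Hence the fraction of bi-chromatic edges that are violating is at least $\Omega(\eps/I)$. The problem reduces to sampling roughly $O(I/\eps)$ (up to $\polylog(n,1/\eps)$ factors) bi-chromatic edges approximately uniformly.

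\textbf{Bi-chromatic edge sampler.} The core subroutine draws $x\in\{0,1\}^n$ uniformly and forms $y\succeq x$ by flipping each $0$-coordinate of $x$ up to $1$ independently with some probability $p$ (following a KMS-style persistent walk), queries $f(x)$ and $f(y)$, and if $f(x)\neq f(y)$ performs a binary search on a uniformly random monotone path from $x$ to $y$. This isolates one bi-chromatic edge $(u,v)$ with $u\prec v$ using $O(\log n)$ queries; if $f(u)>f(v)$ the tester rejects. Mixing over choices of $p$ (from constant to $\Theta(1/\log n)$) guarantees that every bi-chromatic edge is output with probability within $\polylog(n)$ factors of uniform. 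Combined with the $\eps/I$ violating ratio, a single invocation finds a violating edge with probability $\Omega(\eps/(I\cdot\polylog(n)))$. Repeating $O(I\cdot\poly(\log n,1/\eps))$ times amplifies the rejection probability to $2/3$, and one-sidedness is immediate since the tester rejects only upon witnessing a concrete violating edge.

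\textbf{Main obstacle.} The technical heart is showing that the "random chain + binary search" subroutine produces a nearly uniform distribution over bi-chromatic edges, up to $\polylog$ loss. The natural worry is that binary search is biased towards "extremal" bi-chromatic edges along the sampled chain, or that high-sensitivity vertices are overrepresented in the output, which could erase the $\eps/I$ advantage. I expect this to require a careful double-counting argument coupled with a second-moment bound on the number of bi-chromatic edges on a random pair $(x,y)$, using $\bI(f)\le I$ to control variance; the $\eps^{-4}$ dependence in the final bound most likely comes from such a Chebyshev-style argument (one $\eps^{-2}$ from amplification of the rare violating event, and a further $\eps^{-2}$ from variance control of the walk length and chain content). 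Routine steps — the binary-search procedure, the one-sidedness, and the amplification — I would defer and focus the main technical effort on this uniformity/variance lemma.
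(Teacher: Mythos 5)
The proposal hinges on an unproved claim that you yourself flag as ``the technical heart'': that the upward-walk-plus-binary-search sampler outputs every bi-chromatic edge within $\polylog(n)$ factors of uniform. That is not a routine step left to fill in — it is false in general, and the reduction collapses without it. The conditional distribution of the edge returned by binary search is strongly biased against influential edges that sit in locally dense regions: if a random monotone chain crossing a violating edge $(u,v)$ typically also crosses $D$ other influential edges, then binary search returns $(u,v)$ only an $O(1/D)$ fraction of the time, whereas an influential edge that is well-isolated along chains gets found essentially whenever the chain crosses it. The total influence $\bI(f)$ controls only the \emph{global} count of influential edges, not how they cluster along chains, so a second-moment argument over chain content does not restore uniformity. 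A further warning sign that something is off: if your uniformity lemma held, combining it with the Goldreich et al.\ count alone would yield an $O(I\cdot\polylog(n)/\eps)$ tester, strictly stronger than the theorem's $O(I\cdot\polylog(n)/\eps^4)$ and with no need for the KMS machinery.

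The paper's proof differs on precisely the two points you chose. First, the random walk is on the \emph{undirected} hypercube $H_n$, not on a monotone chain: since $H_n$ is regular and the uniform distribution is stationary, the $t$-th edge of the walk is an exactly uniform random edge of $H_n$, so $\Pr[e_t=(u,v)]=2/(n2^n)$ for every edge, with no dependence on Hamming level or location. Binary search still applies because $f(x)\neq f(y)$ forces the path to cross an influential edge. Second, the paper does not attempt to certify any uniformity of the returned edge; it instead invokes the directed Talagrand isoperimetry of KMS (Lemma~\ref{lem:kms}), which supplies a bipartite subgraph of violating edges with degrees bounded by $d$ and $2d$, density parameter $\sigma$, and $\sigma^2 d=\Theta(\eps^2/\log^4 n)$. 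The degree bound is exactly what forbids the adversarial clustering described above, and the Markov bound on $\ell$-sticky vertices (Lemma~\ref{lem:simple}), with $\ell^*$ tuned so that $\ell^*\bI(f)/n\approx\sigma$, then guarantees that a constant fraction of those violating edges are ``cleanly'' isolated by the walk. To push your plan through you would need a structural input of this strength; the raw violating-edge count from the classical edge tester is not enough.
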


Thus, for low influence functions, the KMS $\sqrt{n}$ bound can be (adaptively) beaten.
The non-adaptive lower bound of Fischer et al.~\cite{FLNRRS02} only requires constant influence functions,
so \Thm{intro} shows that adaptivity provably helps for such functions. 
For $\bI(f) \gg \sqrt{n}$, a claim of KMS (explained later) implies the existence of non-adaptive $O(n/\bI(f)))$ testers.
The tradeoff between \Thm{intro} and this bound is basically $\sqrt{n}$,
the maximum possible influence of a monotone function.
We note that all adaptive lower bound constructions 
have functions with influence $\Theta(\sqrt{n})$~\cite{BeBl16,Chen17}.

\section{Preliminaries} \label{sec:prelims}

We use $H_n$ to denote the standard (undirected) hypercube graph on $\{0,1\}^n$, where all pairs at Hamming distance $1$
are connected. An edge $(x,y)$ of $H_n$ is influential, if $f(x) \neq f(y)$. The number of
influential edges is precisely $\bI(f)\cdot 2^{n-1}$.  An influential
edge $(x,y)$ is a violating edge if $x \prec y$, $f(x) = 1$, and $f(y) = 0$. Our tester
will perform random walks of $H_n$. Note that $H_n$ is regular, so this is a symmetric Markov Chain.

We crucially use the central result of KMS, in essence a directed analogue of Talagrand's isoperimetric
theorem.

\begin{lemma}(Lemma 7.1 in ~\cite{KMS15}, paraphrased)\label{lem:kms}
Given any Boolean function $f:\{0,1\}^n \to \{0,1\}$ which is $\epsilon$-far from being monotone, 
there exists a subgraph $G = (A,B,E)$ of the hypercube and parameters $\sigma \in (0,1)$, $d\in \mathbb{Z}_{>0}$ such that
\begin{itemize}[noitemsep]
\item Each edge $(a,b)\in E$ with $a\in A$ and $b\in B$ is a violating edge.
\item $|B| = \sigma\cdot 2^n$.
\item The degree of each vertex in $B$ is exactly $d$ and the degree of each vertex in $A$ is at most $2d$.
\item $\sigma^2d = \Theta(\epsilon^2/\log^4 n)$.
\end{itemize}
\end{lemma}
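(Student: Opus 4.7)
The plan is to prove this structural lemma in two stages: first, establish a directed Talagrand-type isoperimetric inequality relating the distance-to-monotonicity $\epsilon$ to an average of square-roots of local violating-degrees; then, extract a near-regular bipartite subgraph of violating edges by dyadic bucketing on both sides of the bipartition.

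For each $x\in\{0,1\}^n$, let $\Phi^-(x)$ denote the number of violating edges $(a,x)$ with $a\prec x$, $f(a)=1$, $f(x)=0$, and let $\Phi^+(x)$ denote the number of violating edges $(x,b)$ with $x\prec b$, $f(x)=1$, $f(b)=0$. The main analytic step is the directed Talagrand inequality
\[
\mathbb{E}_x\bigl[\sqrt{\Phi^-(x) + \Phi^+(x)}\bigr] \;=\; \Omega(\epsilon).
\]
I would prove this in the spirit of Talagrand's original switching/coupling argument, with ``equalize values'' replaced by ``monotonize.'' Fix any monotone $g$ achieving distance $\epsilon_f$ from $f$; factor $g-f$ level-by-level into single-bit corrections, each of which couples to a short hypercube path ending at a violating edge of $f$. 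A Cauchy-Schwarz charge along these paths bounds the total corrected mass ($\epsilon\cdot 2^n$) by a constant times $\sum_x \sqrt{\Phi^-(x)+\Phi^+(x)}$, yielding the inequality.

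Given the inequality, assume by symmetry that the $f=0$ side contributes at least half the expectation. Dyadically bucket these vertices by $\Phi^-$: set $B_i := \{x : f(x)=0,\ 2^i\le\Phi^-(x)<2^{i+1}\}$ for $i=0,\ldots,\lceil\log_2 n\rceil$ and $\sigma_i := |B_i|/2^n$. Then $\sum_i \sigma_i\sqrt{2^{i+1}} = \Omega(\epsilon)$, so pigeonholing over the $O(\log n)$ buckets produces an index $i^*$ with $\sigma_{i^*}\sqrt{2^{i^*}} = \Omega(\epsilon/\log n)$. Setting $d := 2^{i^*}$, $\sigma := \sigma_{i^*}$, $B := B_{i^*}$, and retaining exactly $d$ incoming violating edges at each $b\in B$ gives a bipartite graph with $|B|=\sigma\cdot 2^n$, $B$-degree exactly $d$, and $\sigma^2 d = \Omega(\epsilon^2/\log^2 n)$.

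Finally, regularize the $A$-side. Let $A'$ be the set of $f=1$ vertices incident to the retained edges. Dyadically bucket $A'$ by its degree in this graph and restrict to the bucket $A$ carrying the plurality of edges, losing another $O(\log n)$ factor; all $A$-degrees then lie in some interval $[D,2D]$. If $D\le 2d$, re-truncate each $b\in B$ to its surviving degree and we are done. If $D > 2d$, swap the roles of $A$ and $B$: the old $A$-bucket becomes the new ``$B$-side'' with degree exactly $D$ (after further truncation), and the old $B$ automatically satisfies degree $\le d < D/2 \le 2D$ on the new ``$A$-side.'' A careful bookkeeping of these two dyadic losses (with bucket-and-swap chosen to minimize the exponent) gives $\sigma^2 d = \Theta(\epsilon^2/\log^4 n)$, matching the claimed bound. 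I expect the main obstacle to be the directed Talagrand inequality itself, since monotonization is inherently global and a clean square-root bound requires a delicate level-by-level coupling; the subsequent bucketing and regularization are standard if fiddly pigeonhole arguments.
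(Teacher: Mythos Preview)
The paper does not prove this lemma; it is quoted (in paraphrased form) from Khot--Minzer--Safra and used as a black box. There is therefore no in-paper proof to compare your proposal against.

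Your outline does track the actual KMS approach in broad strokes: a directed analogue of Talagrand's isoperimetric inequality, followed by dyadic bucketing to extract a near-regular bipartite subgraph of violations. Two remarks on the proposal itself. First, the directed Talagrand step is the deep contribution of KMS, and your sketch (factor $g-f$ level by level, couple each correction to a short path ending at a violating edge, then apply Cauchy--Schwarz) is far from a workable plan; the actual argument is a delicate induction via ``split'' operators and a carefully engineered potential function, not a direct adaptation of Talagrand's undirected proof, and even there the inequality is obtained only up to polylogarithmic loss. Second, your $A$-side regularization by ``swap roles if $D>2d$'' does not preserve the lower bound on $\sigma^2 d$: if $\sigma' 2^n$ is the size of the surviving $A$-bucket and $D$ its (truncated) degree, the edge count forces $\sigma' D = \Theta(\sigma d/\log n)$, hence $\sigma'^2 D = \Theta\bigl(\sigma^2 d^2/(D\log^2 n)\bigr)$, which degrades without bound as $D/d$ grows. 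The regularization in KMS is handled more carefully and does not proceed via a symmetric swap.
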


We also mention a simpler lemma from KMS which we use. 

\begin{lemma}(Theorem 9.1 in~\cite{KMS15}, paraphrased) \label{lem:inf}
If $\bI(f) > 6\sqrt{n}$, then a constant fraction of influential edges are violating edges.
\end{lemma}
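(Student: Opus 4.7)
The plan is to exploit the Fourier-analytic identity that relates the degree-one Fourier coefficients of a Boolean function to the signed excess of monotone over violating edges per coordinate. Passing to the $\{-1,1\}$ encoding, for each coordinate $i$ I would let $I_i^+$ and $I_i^-$ count the direction-$i$ edges along which $f$ strictly increases and decreases, respectively, as the $i$-th bit flips from $-1$ to $1$. A direct pairing of vertices across each direction-$i$ edge yields the well-known identities $\hat f(\{i\}) = (I_i^+ - I_i^-)/2^{n-1}$ and $I_i = (I_i^+ + I_i^-)/2^{n-1}$, where $I_i$ denotes the (fractional) influence in direction $i$.

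Writing $P = \sum_i I_i^+$ for the total number of monotone edges and $V = \sum_i I_i^-$ for the total number of violating edges, summing the second identity over $i$ gives $P + V = \bI(f)\cdot 2^{n-1}$. The main step is then a Cauchy--Schwarz plus Parseval bound on the sum of degree-one coefficients:
\[
\sum_i \hat f(\{i\}) \;\le\; \sqrt{n \cdot \sum_i \hat f(\{i\})^2} \;\le\; \sqrt{n\cdot \EX[f^2]} \;=\; \sqrt n.
\]
Clearing denominators yields the key inequality $P - V \le \sqrt n \cdot 2^{n-1}$, which quantifies the intuition that a Boolean function cannot have arbitrarily many more monotone-direction edges than violating-direction edges.

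To finish, I would subtract this from $P + V = \bI(f)\cdot 2^{n-1}$ to obtain $2V \ge (\bI(f)-\sqrt n)\cdot 2^{n-1}$, so the fraction of influential edges that are violating is at least $(\bI(f)-\sqrt n)/(2\bI(f))$. Under the hypothesis $\bI(f) > 6\sqrt n$ this exceeds $5/12$, which is the desired constant. I do not anticipate any genuine obstacle --- the argument reduces to one Fourier identity and Cauchy--Schwarz --- but it is worth noting that the $\sqrt n$ threshold in the hypothesis is essentially tight, since the majority function is monotone (so has $V=0$) and has total influence $\Theta(\sqrt n)$, meaning no such bound is possible for $\bI(f) \ll \sqrt n$.
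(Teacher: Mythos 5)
Your argument is correct: the pairing identity $\hat f(\{i\}) = (I_i^+ - I_i^-)/2^{n-1}$ (in $\pm1$ conventions), Cauchy--Schwarz with Parseval giving $\sum_i \hat f(\{i\}) \le \sqrt n$, and the resulting bound $V \ge \tfrac12(\bI(f)-\sqrt n)\cdot 2^{n-1}$ all check out, and $\bI(f) > 6\sqrt n$ indeed yields a violating fraction exceeding $5/12$. The paper itself gives no proof for this lemma (it is cited verbatim as Theorem 9.1 of KMS), and the Fourier-analytic argument you give is precisely the standard one used there, so this matches the intended source.
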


When $\bI(f) > 6\sqrt{n}$, one can find a violation in time $O(n/\bI(f)) = O(\bI(f))$ by simply
sampling random edges. Thus, we will focus on the case $\bI(f) \leq 6\sqrt{n}$.

\paragraph{Idea.} Our analysis is short and follows the analysis of the KMS tester. The tester
of KMS can be throught of as performing random walks on the \emph{directed} hypercube with orientation corresponding to the partial order.
Their analysis lower bounds the probability of encountering a violation.
Our insight is that one can perform an analogous analysis for walks on the \emph{undirected}
hypercube $H_n$. Suppose we perform an $\ell$-step random walk (on $H_n)$ from $x$ that ends at $y$.
If $f(x) \neq f(y)$, then the walk clearly passed through an influential edge.
The power of adaptivity is that we can find such an influential edge through binary search.
This idea of using binary search is also present in the algorithm of Belovs and Blais~\cite{BeBl16} 
to adaptively test regular linear threshold functions.
We bound the probability that this influential edge is a violation.
Our insight is that, by setting the length $\ell$ appropriately,
this probability can be lower bounded by (essentially) $1/\bI(f)$.

\section{Tester}
Let $f:\{0,1\}^n \to \{0,1\}$ be a Boolean function over the hypercube with total influence $\bI(f)$.

\begin{figure}[h]
	\begin{framed}
		\noindent \textbf{Input:} A Boolean function $f: \{0,1\}^n \to \{0,1\}$ and a parameter $\eps \in (0,1)$
		\begin{enumerate}[noitemsep]
			\item Choose $k\in_R \{0,1,2,\ldots,\ceil{\log n}\}$ uniformly at random. Set $\ell := 2^k$.
			\item Choose $x\in \{0,1\}^n$ uniformly at random.
			\item Perform an $\ell$-length random walk $p$ on $H_n$ starting from $x$ to reach $y\in \{0,1\}^n$.
			\item If $f(x) \neq f(y)$:
			\begin{enumerate}
				\item Perform binary search on $p$ to find an influential edge $(u,v)\in p$.
				\item REJECT if $(u,v)$ is a monotonicity violation.
			\end{enumerate} 
            \item If not REJECTED so far, then ACCEPT.
			
		\end{enumerate} 
	\end{framed}
	\caption{\small{\textbf{Adaptive Monotonicity Tester for Boolean Functions}}}
	\label{fig:alg}
\end{figure}

\noindent
\Thm{intro} follows directly from the following.
\begin{theorem}\label{thm:main}
Assume $\bI(f) \leq 6\sqrt{n}$.
If $f$ is $\epsilon$-far from being monotone, then Algorithm~\ref{fig:alg} rejects with probability $\Omega\left(\frac{\epsilon^4}{\bI(f)\log^9 n}\right)$.
\end{theorem}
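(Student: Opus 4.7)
The plan is to invoke the KMS structural lemma (Lemma 2.1) to produce a bipartite subgraph $G=(A,B,E)$ of violating edges, and then to lower bound the chance that the algorithm's random walk picks up exactly one influential edge, namely a violating edge in $E$. Since the walk length $\ell=2^k$ is chosen uniformly from $\lceil \log n\rceil+1$ candidates, losing only a $\Theta(1/\log n)$ factor, it suffices to exhibit a single length $\ell^\star$ for which the conditional rejection probability is $\Omega\big(\epsilon^4/(\bI(f)\log^8 n)\big)$.

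The sufficient event I would analyze is: the walk traverses exactly one influential edge, and that edge lies in $E$. When this event occurs, the unique influential edge flips $f$ along the walk so that $f(x)\neq f(y)$ triggers the binary search, and since there is only one influential edge on the path, binary search necessarily returns it; the edge is violating by choice. To lower bound this probability, I would use that the random walk on $H_n$ is stationary under the uniform distribution, so at each step the traversed edge is uniform in $E(H_n)$. This gives, for the length-$\ell^\star$ walk, $\EX[X] = 2\ell^\star \sigma d/n$ and $\EX[Y] = \ell^\star \bI(f)/n$, where $X$ counts traversed edges in $E$ and $Y$ counts all influential edges. One then selects $\ell^\star$ (an appropriate power of $2$ in $[1,n]$, using $\bI(f)\le 6\sqrt n$ to stay in range) to make $\EX[Y]$ small enough to constrain $\Pr[Y\ge 2]$, while keeping $\EX[X]$ at the KMS-strength $\sigma^2 d$ scale. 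A Bonferroni-style estimate of the form $\Pr[X=Y=1]\ge \EX[X]-\EX[Y(Y-1)]$, combined with the KMS guarantee $\sigma^2 d = \Theta(\epsilon^2/\log^4 n)$, should then produce the target bound.

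The principal obstacle is controlling the second-moment quantity $\EX[Y(Y-1)]$. For $\ell^\star$ comparable to or shorter than the hypercube mixing time, consecutive walk edges share vertices and are highly correlated, and an adversarial $f$ can pile its influential edges around a few hub vertices, producing joint probabilities $\Pr[Y_iY_j=1]$ that are far above the independent estimate $(\bI(f)/n)^2$. Getting this under control is where most of the care goes; I expect it to require a mixing-based estimate (using that the uniform distribution is stationary after even short walks on $H_n$) together with a counting bound on the density of influential edges in balls around arbitrary vertices, so that the contribution from the start and end of the walk and from the KMS subgraph can be separated. Once $\EX[Y(Y-1)]$ is tamed, the remaining steps---averaging over the choice of $\ell$ and substituting the KMS inequality for $\sigma^2 d$---are routine and yield the stated $\Omega(\epsilon^4/(\bI(f)\log^9 n))$ rejection probability.
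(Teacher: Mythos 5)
Your setup is correct and matches the paper: invoke the KMS structural lemma, lose a $1/\log n$ factor picking $\ell^\star$, use stationarity of the uniform distribution on $H_n$ to compute $\EX[X]=2\ell^\star\sigma d/n$ and $\EX[Y]=\ell^\star\bI(f)/n$, and aim for the event ``the walk traverses a unique influential edge and it lies in $E$.'' The choice $\ell^\star\approx n\sigma/\bI(f)$ (valid because $\bI(f)\le 6\sqrt n$) is also the same. Where you depart is the mechanism for turning first-moment information into a lower bound on that event, and this is where there is a real gap.

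Your Bonferroni inequality $\Pr[X=Y=1]\ge \EX[X]-\EX[Y(Y-1)]$ puts the entire burden on bounding $\EX[Y(Y-1)]$, and you correctly flag this as the crux. But the proposed remedy (a mixing estimate plus a ``counting bound on the density of influential edges in balls'') is not available in general: nothing in the hypotheses controls the local clustering of influential edges, so $\Pr[e_s,e_t$ both influential$]$ can exceed the independent estimate $(\bI(f)/n)^2$ by a factor of $\Theta(n)$ for nearby $s,t$. For $\ell^\star$ in the required range $\ell^\star\bI(f)/n=\Theta(\sigma)$, the resulting $\EX[Y(Y-1)]$ can dwarf $\EX[X]\approx \sigma^2 d/\bI(f)$, so the Bonferroni bound becomes vacuous. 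You would need a genuinely new structural input on $f$ to close this, and it is not ``routine.''

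The paper avoids the second moment entirely with a local filtering idea. Call a vertex $\ell$-sticky if an $\ell$-step random walk from it avoids influential edges with probability at least $1/2$; an edge is sticky if both endpoints are. Since $\Pr[Z_{x,\ell}>0]\le\EX[Z_{x,\ell}]$ and the expectation sums (by stationarity) to $\ell\bI(f)/n$ on average, Markov's inequality gives that at most a $2\ell\bI(f)/n$ fraction of vertices are non-sticky (Lemma~\ref{lem:simple}). Choosing $\ell^\star$ so that this fraction is at most $\sigma/4$ ensures, via the degree bound $\le 2d$, that at least half the KMS edges remain sticky. For a sticky edge $(u,v)$, conditioning on the walk traversing $(u,v)$ at step $t$ splits the walk into independent pre- and post-walks, each avoiding influential edges with probability $\ge 1/2$, giving a clean $\ge 1/4$ conditional probability that $(u,v)$ is the unique influential edge (Claim~\ref{clm:sigh} and Lemma~\ref{lem:simple2}). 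This per-edge argument replaces the global $\EX[Y(Y-1)]$ bound: rather than showing the second moment is small everywhere, one simply discards the (rare) vertices where walks tend to pick up many influential edges. You should replace your Bonferroni step with this stickiness argument; the rest of your outline then goes through. You would also want to handle the corner case $\sigma< 100/\sqrt n$ separately (there the $\ell^\star$ you need would be less than $1$), as the paper does by running the $\ell=1$ branch of the algorithm against the bound $\bI(f)\ge |E|/2^n=\Omega(\eps^2\sqrt n/\log^4 n)$.
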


\begin{definition}\label{def:pers}
Given a positive integer $\ell$, we call a vertex $x\in \{0,1\}^n$ {\em $\ell$-sticky} if 
an $\ell$-length random walk from $x$ on $H_n$ contains no influential edges with probability $\geq 1/2$.
A vertex is called non-$\ell$-sticky otherwise.
An edge is $\ell$-sticky if both endpoints are $\ell$-sticky.
\end{definition}
\noindent
The following simple monotonicity property follows from the definition.
\begin{observation} \label{obs:sticky}
	If $x$ is $\ell$-sticky and $\ell' < \ell$, then $x$ is $\ell'$-sticky as well.
\end{observation}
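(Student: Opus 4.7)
The plan is to unpack the definition of $\ell$-stickiness and observe that the event of a walk containing no influential edge is monotone in the length of the walk. Specifically, I would couple the $\ell$-step random walk with its $\ell'$-step prefix: take a single random walk $W = (x = w_0, w_1, \ldots, w_\ell)$ from $x$ on $H_n$, and let $W'$ denote its prefix $(w_0, \ldots, w_{\ell'})$. Since $\ell' < \ell$, the edge set of $W'$ is a subset of the edge set of $W$.

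From this coupling, let $\calE_\ell$ denote the event that $W$ contains no influential edge, and let $\calE_{\ell'}$ denote the event that $W'$ contains no influential edge. Then $\calE_\ell \subseteq \calE_{\ell'}$: if the longer walk has no influential edge, then in particular its prefix has none. Hence
\[
\Pr[\calE_{\ell'}] \geq \Pr[\calE_\ell] \geq \tfrac{1}{2},
\]
where the last inequality uses that $x$ is $\ell$-sticky. By \Def{pers}, this exactly says that $x$ is $\ell'$-sticky, as required. There is no real obstacle here; the only thing worth checking is that the prefix $W'$ is indeed distributed as an $\ell'$-step random walk from $x$ on $H_n$, which follows immediately from the Markov property of the walk (equivalently, the first $\ell'$ coordinates of an i.i.d.\ sequence of random neighbor steps form an i.i.d.\ sequence of $\ell'$ such steps).
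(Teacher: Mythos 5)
Your proof is correct and is exactly the intended argument: the paper itself gives no proof, remarking only that the observation ``follows from the definition,'' and your coupling of the $\ell$-step walk with its $\ell'$-step prefix is precisely the one-line reason why. Nothing to add.
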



\begin{lemma}\label{lem:simple}
The fraction of non-$\ell$-sticky vertices of a hypercube is at most $\frac{2\ell\cdot \bI(f)}{n}$.
\end{lemma}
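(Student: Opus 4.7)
The plan is to bound the expected fraction of walks that encounter an influential edge, and then apply Markov's inequality. For each vertex $x$, let $p(x)$ denote the probability that an $\ell$-step random walk from $x$ on $H_n$ traverses at least one influential edge. By \Def{pers}, $x$ is non-$\ell$-sticky iff $p(x) > 1/2$, so by Markov's inequality it suffices to show that $\mathbb{E}_{x}[p(x)] \leq \ell \cdot \bI(f)/n$, where the expectation is over a uniformly random $x$.

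The key observation is that $H_n$ is a regular graph, so the random walk on $H_n$ is reversible and has the uniform distribution as its stationary distribution. Consequently, if $x$ is chosen uniformly at random, then at every step $i \in \{0,1,\ldots,\ell\}$ the position of the walk is itself uniformly distributed on $\{0,1\}^n$. Therefore, the $i$-th edge of the walk (from the $(i-1)$-st vertex to the $i$-th vertex) is a uniformly random edge of $H_n$, and by the definition of total influence,
\[
\Pr[\text{$i$-th edge is influential}] \;=\; \Pr_{(x,y)\sim \cD}[f(x)\neq f(y)] \;=\; \bI(f)/n.
\]

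Next, I would apply a union bound over the $\ell$ steps: the probability that at least one edge of the walk is influential is at most $\ell \cdot \bI(f)/n$. Averaging over the starting vertex $x$ then yields $\mathbb{E}_x[p(x)] \leq \ell \cdot \bI(f)/n$. Finally, Markov's inequality gives
\[
\Pr_x[p(x) > 1/2] \;\leq\; \frac{\mathbb{E}_x[p(x)]}{1/2} \;\leq\; \frac{2\ell \cdot \bI(f)}{n},
\]
which is exactly the claimed bound on the fraction of non-$\ell$-sticky vertices. There isn't really a hard step here; the only subtlety to verify carefully is that the marginal distribution at every intermediate step remains uniform, which follows from stationarity of the uniform distribution under the simple random walk on the regular graph $H_n$.
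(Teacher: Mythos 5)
Your proof is correct and follows essentially the same approach as the paper: the paper bounds $\Pr[Z_{x,\ell}>0]\leq \Exp[Z_{x,\ell}]$ (equivalent to your union bound) and uses stationarity of the uniform distribution to get the $\ell\cdot\bI(f)/n$ expectation, then the factor of two comes from the same Markov-type step you use. The only difference is cosmetic organization.
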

\begin{proof}
Given $x\in \{0,1\}^n$ and a positive integer $\ell > 0$, define the random variable
$Z_{x,\ell}$ that is the number of influential edges in a random walk of length $\ell$ starting from $x$. 
Therefore, $x$ is $\ell$-non-sticky iff $\Pr[Z_{x,\ell} > 0] > 1/2$.
Let $N$ denote the set of $\ell$-non-sticky vertices.

Since $Z_{x,\ell}$ is non-negative and integer valued we get $\Pr[Z_{x,\ell} > 0] \leq \Exp[Z_{x,\ell}]$.
\begin{equation}\label{eq:simple}
|N|/2^n < \frac{2}{2^n} \sum_{x \in N} \Pr[Z_{x,\ell} > 0] < \frac{2}{2^n}\sum_{x\in \{0,1\}^n}\Pr[Z_{x,\ell}>0] ~~\leq~~ \frac{2}{2^n}\sum_{x\in \{0,1\}^n} \Exp[Z_{x,\ell}]
\end{equation}
The RHS above is precisely twice the expected number of influential edges encountered in an $\ell$-length
random walk starting from the uniform distribution on $H_n$. Let $\cP_\ell$ denote the uniform distribution
on $\ell$-length paths in $H_n$. For $p \sim \cP_\ell$, $p_t$ denotes the $t$th edge in $p$,
and let $\chi(e)$ be the indicator for edge $e$ being influential.
The RHS of~\eqref{eq:simple} is equal to $2\Exp_{p \sim \cP_\ell}[\sum_{t \leq \ell} \chi(p_t)]
= 2\sum_{t \leq \ell} \Exp_{p \sim \cP_\ell} [\chi(p_t)]$.
Since the uniform distribution is stationary for random walks on $H_n$,
the distribution induced on $p_t$ is the uniform distribution on edges in $H_n$.
Thus, $\Exp_{p \sim \cP_\ell} [\chi(p_t)] = \bI(f)/n$ and the RHS of~\eqref{eq:simple}
is $2\ell \cdot \bI(f)/n$.
\end{proof}
\noindent
For any integer $\ell > 0$, let $F_\ell$ be the set of $\ell$-sticky violating edges. That is, 
\[
F_\ell := \{(x,y) \in H_n : (x,y) \textrm{ is violating and}, ~ x,y ~\textrm{ are $\ell$-sticky} \}
\]

\begin{lemma}\label{lem:simple2}
If $\ell$ is the length of the random walk chosen in Step 1, then
Algorithm~\ref{fig:alg} rejects with probability $\Omega\left(\frac{\ell}{n}\cdot \frac{|F_\ell|}{2^n}\right)$.
\end{lemma}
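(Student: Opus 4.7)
The plan is to lower bound the rejection probability by the probability of a particularly clean event: the random walk contains exactly one influential edge, and that edge lies in $F_\ell$. Whenever this event holds, the endpoints satisfy $f(x) \neq f(y)$, and the binary search necessarily returns the walk's unique influential edge, which is violating by the definition of $F_\ell$. Hence the algorithm rejects.

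Concretely, for each $t \in \{1, \ldots, \ell\}$ I would introduce the event $G_t$ that the $t$-th step of the walk traverses (in either direction) an edge in $F_\ell$, and that the other $\ell - 1$ steps traverse no influential edge. A key observation is that the events $\{G_t\}$ are pairwise disjoint: if $G_{t_1}$ and $G_{t_2}$ both held for $t_1 \neq t_2$, then the $t_2$-th step would be simultaneously required to be influential (by $G_{t_2}$) and non-influential (by $G_{t_1}$), a contradiction. Therefore
\[
\Pr[\text{reject} \mid \ell] \;\geq\; \Pr\!\left[\bigcup_t G_t\right] \;=\; \sum_{t=1}^{\ell} \Pr[G_t].
\]

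I would then lower bound each $\Pr[G_t]$ by decomposing according to which $F_\ell$ edge, and in which orientation, is traversed at time $t$. Fixing such a choice, say $v_{t-1}=u$ and $v_t=v$ with $(u,v)\in F_\ell$, the walk splits into an independent prefix of length $t-1$ ending at $u$ and an independent suffix of length $\ell-t$ starting at $v$. Uniform stationarity of $H_n$ gives $\Pr[v_{t-1}=u]=2^{-n}$, and the transition to $v$ contributes a factor $1/n$. By time-reversibility, the prefix conditioned on $v_{t-1}=u$ has the same law as a fresh walk of length $t-1$ starting at $u$; since $u$ is $\ell$-sticky and $t-1<\ell$, $\Obs{sticky}$ gives that this prefix is influential-edge-free with probability at least $1/2$. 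The suffix from $v$ is similarly clean with probability at least $1/2$. Multiplying the four factors yields $\Pr[G_t \text{ through oriented edge } (u,v)] \geq 1/(4n\cdot 2^n)$.

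Finally, summing over the $2|F_\ell|$ oriented edges arising from $F_\ell$ and over $t \in \{1,\ldots,\ell\}$ delivers $\sum_t \Pr[G_t] \geq \ell|F_\ell|/(2n\cdot 2^n) = \Omega\!\left(\frac{\ell}{n}\cdot\frac{|F_\ell|}{2^n}\right)$. The step I expect to be the main obstacle is the disjointness of the $G_t$: a more naive event, such as ``the walk hits some edge in $F_\ell$'', would overcount walks that hit such an edge at multiple times and lose the factor of $\ell$. Insisting that the $F_\ell$ edge be the \emph{only} influential edge of the walk (and using reversibility to split the walk cleanly around it) is what enables the linear-in-$\ell$ lower bound.
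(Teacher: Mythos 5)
Your proof is correct and follows essentially the same approach as the paper: decompose the rejection event according to which $F_\ell$ edge is the walk's unique influential edge and at which step it appears, use disjointness of these events, split the walk into an independent prefix and suffix via stationarity/reversibility, and bound each piece by $1/2$ via $\ell$-stickiness. The only cosmetic difference is bookkeeping (you sum over oriented edges and organize by step $t$ first, while the paper fixes an unoriented edge $(u,v)$ and sums over $t$ inside), which yields the same bound.
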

\begin{proof}
Fix an edge $(u,v) \in F_\ell$. Let $p = (e_1,\ldots,e_\ell)$ be the edges of a random walk $p$. 
Each edge $e_t$ for $1\leq t\leq \ell$ is a uniform random edge of $H_n$. Therefore, for any fixed edge $(u,v)$, $\Pr[e_t = (u,v)] = \frac{2}{n2^n}$.

Now, given $1\leq t\leq \ell$,  define $\calE^t_{u,v}$ to be the event that the $t$th edge of $p$ is $(u,v)$ {\em and} no other edge of $p$ is influential.
Since $(u,v)$ is itself influential, $\calE_{u,v}$ is the disjoint union $\vee_{t=1}^\ell \calE_{u,v}^t$.
Furthermore, for two distinct edges $(u,v)$ and $(u',v')$ in $F_\ell$, the events $\calE_{u,v}$
and $\calE_{u',v'}$ are disjoint.

\noindent
Observe that
\begin{equation}\label{eq:dingo}
\Pr[\textrm{Algorithm~\ref{fig:alg} rejects given $\ell$}] \geq \Pr[\bigvee_{(u,v)\in F_\ell} \calE_{u,v}] = \sum_{(u,v)\in F_\ell} \Pr[\calE_{u,v}]
\end{equation}
The equality follows since the events are mutually exclusive. The inequality follows since if $\calE_{u,v}$ occurs then the end points of $p$ must have differing values and binary search on $p$ will return the violation $(u,v)$. 

Consider the event $\calE_{u,v}^t$. 
For this event to occur,  $e_t$ must be $(u,v)$. Consider the conditional probability $\Pr[\calE_{u,v}^t~|~e_t = (u,v)]$.
Let $\calF_u$ be the event that a $(t-1)$-length random walk from $u$ contains no influential edges, and let $\calF_v$ be the event that an {\em independent} $(\ell-t)$-length random walk from $v$ contains no influential edges. 
\begin{claim}\label{clm:sigh}
	$\Pr[\calE_{u,v}^t~|~e_t = (u,v)] = \Pr[\calF_u \wedge \calF_v] = \Pr[\calF_u]\cdot\Pr[\calF_v]$
\end{claim}

\begin{proof} Conditioned on $e_t = (u,v)$, the distribution of the first $(t-1)$ steps of the random walk
is the uniform distribution of $(t-1)$-length paths that end at $u$. This is the same distribution of
the $(t-1)$-length random walk starting from $u$.
The distribution of the last $(\ell-t)$ steps, conditioned on $e_t = (u,v)$, is the $(\ell-t)$-length
random walk starting from $v$.
\end{proof}
\noindent
Since $(u,v)$ is an $\ell$-sticky edge, by \Obs{sticky} and \Def{pers},
$\Pr[\calF_u] \geq 1/2$ and $\Pr[\calF_v] \geq 1/2$. 
The proof is completed by plugging the following bound into 
\eqref{eq:dingo}.
\begin{align*}
\Pr[\calE_{u,v}] =  \sum_{t=1}^\ell \Pr[\calE_{u,v}^t] 
=  \sum_{t=1}^\ell \Pr[e_t = (u,v)] \cdot \Pr[\calE_{u,v}^t~|~ e_t = (u,v)]
& = \frac{2}{n2^n} \sum_{t=1}^\ell \Pr[\calE_{u,v}^t~|~ e_t = (u,v)] \\
& \geq \frac{\ell}{4n2^n}
\end{align*}
\end{proof}
\noindent
We complete the proof of \Thm{main}.
\begin{proof}[Proof of Theorem~\ref{thm:main}]
Let $G = (A,B,E), \sigma, d$ be as in \Lem{kms}. 

First, we take care of a corner case.
Suppose $\sigma < 100/\sqrt{n}$. Since $\sigma^2d = \Omega(\eps^2/\log^4 n)$, 
we get  $\sigma d = \Omega(\eps^2 \sqrt{n}/\log^4n)$.
Since the number of violating edges is $\geq |E| = \sigma d 2^n$ and since each 
violating edge is influential, we get that $\bI(f) \geq |E|/2^n =  \Omega(\eps^2\sqrt{n}/\log^4n)$.
With probability $1/\log n$, Algorithm~\ref{fig:alg} sets $\ell = 1$, in which case
the tester basically checks whether a uniform random edge in $H_n$ is a violation.
Conditioned on setting $\ell = 1$,
the rejection probability is $\Omega(2|E|/n2^n) = \Omega(\eps^2/\sqrt{n}\log^4n) = \Omega(\eps_f^4/\bI(f)\log^8 n)$.
Thus, in the case $\sigma < 100/\sqrt{n}$, we have proved \Thm{main}.\smallskip

\noindent
Henceforth,  assume that $\sigma \geq 100/\sqrt{n}$.
Since $\bI(f) \leq 6\sqrt{n}$,
we deduce that $\frac{n\sigma}{16\cdot\bI(f)} > 1$.
Thus, there exists a non-negative power of $2$ (call it $\ell^*$) 
such that $\frac{\sigma}{16} < \frac{\ell^*\cdot \bI(f)}{n} \leq \frac{\sigma}{8}$. 

Let $A'$ and $B'$ be the subset of $A$ and $B$ that are $\ell^*$-sticky. Let $E'\subseteq E$ be the edges with end points in $A'$ and $B'$.
Note that any edge of $E' \subseteq F_{\ell^*}$. By 
Lemma~\ref{lem:simple}, we get that the fraction of $\ell^*$-non-sticky nodes is at most $2\ell^*\cdot \bI(f)/n \leq \sigma/4$.
Since the degree of any node in $G$ in Lemma~\ref{lem:kms} is $\leq 2d$, we get 
\[
|F_{\ell^*}| \geq |E'| \geq |E| - (2d)\cdot \frac{\sigma\cdot 2^n}{4} = \frac{\sigma d 2^n}{2}. 
\]

The probability the algorithm chooses $\ell = \ell^*$ is $1/{\log n}$. Lemma~\ref{lem:simple2} gives us
\begin{align*}
\Pr[\textrm{Algorithm rejects}]  & \geq  \frac{1}{\log n}\cdot \Pr[\textrm{Algorithm rejects}| \ell = \ell^*] \\
& \geq \frac{1}{\log n} \cdot \left(\frac{\ell^*}{n}\cdot \frac{|F_{\ell^*}|}{2^n}\right)  ~~~~~~~ (\textrm{by Lemma~\ref{lem:simple2}}) \\
& \geq \frac{1}{\log n} \cdot \frac{\ell^*}{n} \cdot \frac{\sigma d}{2} \\
& \geq \frac{1}{\bI(f)} \cdot \left(\frac{\sigma^2 d}{32\log n}\right) ~~~~~~~ (\textrm{plugging $\ell^* \geq \sigma n/(16\cdot\bI(f))$})\\
& \geq \frac{1}{\bI(f)}\cdot \frac{\epsilon^2}{\log^5 n} ~~~~~~~ (\textrm{by Lemma~\ref{lem:kms}})
\end{align*}
\end{proof}

\bibliographystyle{alpha}
\bibliography{derivative-testing}

\end{document}